\documentclass{llncs}
\usepackage{amsmath}
\usepackage{amssymb}
\usepackage{amsfonts}

\newcommand{\mytimes}{\mathord{\times}}
\newcommand{\mone}{\text{-}1}

\newcommand{\tspace}{\text{\textvisiblespace}}

\title{The Parameterized~Complexity of {Domination-type} Problems  and
 Application~to~Linear~Codes}

\author{David Cattan\'eo\inst{1}\and Simon Perdrix \inst{2}}
\institute{LIG UMR 5217, University of Grenoble, France\\ \email{David.Cattaneo@imag.fr}  \and CNRS, LORIA UMR 7503, Inria Project Team CARTE, Nancy, France\\ \email{Simon.Perdrix@loria.fr}}
\date{}
\begin{document}

\maketitle

\begin{abstract} 
We study the parameterized complexity of domination-type problems.
$(\sigma,\rho)$-domination is a general and unifying framework introduced by Telle: given $\sigma, \rho \subseteq \mathbb N$, a set $D$ of vertices of a graph $G$ is $(\sigma,\rho)$-dominating if for any $v\in D$, $|N(v)\cap D|\in \sigma$ and for any $v\notin D, |N(v)\cap D|\in \rho$. 
Our main result is that for any $\sigma$ and $\rho$ recursive sets,   deciding whether there exists a $(\sigma,\rho)$-dominating set of size $k$, or of size at most $k$, are both in W$[2]$. This general statement is optimal in the sense that several particular instances of $(\sigma,\rho)$-domination are W$[2]$-complete (e.g. \textsc{Dominating Set}). We prove the W$[2]$-membership for the \emph{dual} parameterization too, i.e. deciding whether there exists a $(\sigma,\rho)$-{dominating} set of size $n-k$ (or at least $n-k$) is in W$[2]$, where $n$ is the order of the input graph. We extend this result to a class of domination-type problems which do not fall into the $(\sigma,\rho)$-domination framework, including \textsc{Connected Dominating Set}. We also consider problems of coding theory which are related to domination-type problems with parity constraints.  In particular, we prove that the problem of the minimal distance of a linear code over $\mathbb F_q$ is in W$[2]$ when $q$ is a power of prime, for both standard and dual parameterizations, and W$[1]$-hard for the dual parameterization. 

To prove the W$[2]$-membership of the domination-type problems we extend the Turing-way to parameterized complexity by introducing a new kind of non-deterministic Turing machine with the ability to perform `blind' transitions, i.e. transitions which do not depend on the content of the tapes. We prove that the corresponding problem \textsc{Short Blind Multi-Tape Non-Deterministic Turing Machine} is W$[2]$-complete.  We believe that this new machine 
can be used to prove W$[2]$-membership of other problems, not necessarily related to domination.

\end{abstract}

\section{Introduction}
%

\subsubsection{Domination-type problems.} 
Domination problems are central in graph theory. Telle \cite{complexDomination} introduced the notion of $(\sigma,\rho)$-domination as a unifying framework for many problems of domination: for any  two sets of integers $\sigma$ and $\rho$, a set $D$ of vertices of a graph $G$ is \emph{$(\sigma,\rho)$-dominating} if for any vertex  $v\in D$, $|N(v)\cap D|\in \sigma$ and for any vertex $v\notin D$, $|N(v)\cap D| \in \rho$. 
Among others, dominating sets, independent sets, and perfect codes are some particular instances of $(\sigma, \rho)$-domination. 
When $\sigma, \rho \in \{\text{ODD}, \text{EVEN}\}$ (where $\text{EVEN}:=\{2n, n\in \mathbb N\}$ and $\text {ODD}:= \mathbb N\setminus \text{EVEN}$), $(\sigma,\rho)$-domination is strongly related to problems in coding theory such as finding the minimal distance of a linear code \cite{mod2domination}. 
Despite its generality, the $(\sigma,\rho)$-domination framework does not capture all the variants of domination. For instance, connected dominating set (i.e. a dominating set which induces a connected subgraph) does not fall into the $(\sigma,\rho)$-domination framework.

\subsubsection{Parameterized complexity of domination-type problems.} Most of the  domination-type problems are NP-hard \cite{complexDomination}, though some of them  are fixed-parameter tractable. We assume the reader is familiar with parameterized complexity and the W-hierarchy, otherwise we refer to \cite{monograph,flum2006parameterized}. The parameterized complexity of domination-type problems has been intensively studied \cite{sigmarho,otherDom,regSubgraph} since the seminal paper by Downey and Fellows \cite{fptc1}. For instance, \textsc{Dominating Set} is known to be W$[2]$-complete \cite{fptc1}, whereas \textsc{Independent Set} and \textsc{Perfect Code} are W$[1]$-complete \cite{fptc1,perfectCode} (see Figure \ref{tabDomination} for a list of  domination-type problems with their parameterized complexity). Another example is \textsc{Total Dominating Set} which is known to be W$[2]$-hard \cite{fpintractDom}. Parameterized complexity of domination-type problems  with parity constraints -- and as a consequence the parameterized complexity of the corresponding problems in coding theory -- has been studied in \cite{paraCompCode}: \textsc{OddSet} and \textsc{Weight Distribution} are W$[1]$-hard and in W$[2]$, whereas \textsc{EvenSet} and \textsc{Minimal Distance} are in W$[2]$. Additionally to these particular cases of domination-type problems, general results reveal how the parameterized complexity of $(\sigma,\rho)$-domination depends on the choice of $\sigma$ and $\rho$. For instance, Golovach et al. \cite{sigmarho} proved that when $\sigma\subseteq \mathbb N$ and $\rho\subseteq \mathbb N^+$ are non-empty finite sets, the problem of deciding whether a graph has a $(\sigma,\rho)$-dominating set of size greater than a fixed-parameter $k$ is W$[1]$-complete.  

In parameterized complexity, the choice of the parameter is decisive. For all the problems mentioned above the standard parameterization is considered, i.e. the parameter is the size of the solution, i.e. the $(\sigma,\rho)$-dominating set. Domination-type problems have also been studied according to  the dual parameterization, i.e. the parameter is the size of the $(\sigma,\rho)$-\emph{dominated} set. With the dual parameterization, the problem associated with $(\sigma,\rho)$-domination is FPT when $\sigma$ and $\rho$ are either finite or cofinite \cite{sigmarho}. As a consequence, \textsc{Independent Set}, \textsc{Dominating Set} and \textsc{Perfect Code} are FPT for the dual parameterization.  
With parity constraints (i.e. $\sigma, \rho \in \{\text{ODD}, \text{EVEN}\}$), the problem associated with $(\sigma, \rho)$-domination has been proved to be W$[1]$-hard \cite{sigmarho} for the dual parameterization. Attention was also paid to the parameterized complexity of $(\sigma,\rho)$-domination when parameterized by the tree-width of the graph \cite{sigRhoTreeWidth,treeWidth}.

\subsubsection{Our results.}  The main result of the paper is that for any $\sigma$ and $\rho$ recursive sets,  $(\sigma,\rho)$-domination belongs to W$[2]$ for the \emph{standard} parameterization i.e. $(\sigma,\rho)$-dominating set of size $k$ (and at most $k$).

This  general statement is optimal in the sense that problems of $(\sigma,\rho)$-domi\-nation are known to be W$[2]$-hard for some particular instances of $\sigma$ and $\rho$ (e.g. \textsc{Dominating Set}). We also prove that  for any $\sigma$ and $\rho$ recursive sets, $(\sigma,\rho)$-domination belongs to W$[2]$ for the \emph{dual} parameterization i.e. $(\sigma,\rho)$-dominating set of size $n-k$ (and at least $n-k$).
For several particular instances of $\sigma$ and $\rho$, the W$[2]$-membership was unknown: the standard parameterization of \textsc{Total Dominating Set} was not known to belong to W$[2]$, and neither did the dual parameterization of $(\sigma,\rho)$-domination for $\sigma,\rho\in \{\text{ODD},\text{EVEN}\}$. 

Moreover, we prove that \textsc{Strong Stable Set} (known to be in W$[1]$ \cite{sigmarho}) is W$[1]$-complete for the standard parameterization.

We also consider more general problems that do not fall into the $(\sigma,\rho)$-domination framework. For any property $P$ and any set $\rho$ of integers, $D$ is a $(P,\rho)$-dominating set in a graph $G$ if $(i)$ the subgraph induced by $D$ satisfies the property $P $ and $(ii)$ for any vertex $v\notin D$, $|N(v)\cap D|\in \rho$. A connected dominating set corresponds to $\rho = \mathbb N^+$ and $P$ being the property that the graph is connected. We prove that the standard parameterization of $(P,\rho)$-domination is in W$[2]$ i.e. $(P,\rho)$-dominating set of size $k$ (and at most $k$) for any $P$ and $\rho$ recursive. As a consequence, \textsc{Connected Dominating Set} is W$[2]$-complete. We also prove that another domination problem, \textsc{Digraph Kernel}, is W$[2]$-complete.

Finally, regarding problems in linear coding theory, we show that  the dual parameterization of  \textsc{Weight Distribution} and \textsc{Minimal Distance} are both in W$[2]$. We also consider extensions of these two problems from the field $\mathbb F_2$ to $\mathbb F_q$ for any power of prime $q$, and show  that \textsc{Weight Distribution over $\mathbb F_q$} is W$[1]$-hard and in W$[2]$ for both standard and dual parameterizations; and that \textsc{Minimal Distance over $\mathbb F_q$} is in W$[2]$ for the standard parameterization, and W$[1]$-hard and in W$[2]$ for the dual parameterization. 

Our contributions are summarized in Figure \ref{tabDomination}.

\subsubsection{Our approach: extending the Turing way to parameterized complexity.} The Turing way to parameterized complexity \cite{turingWay} consists in solving a problem with a particular kind of Turing machine to prove that the problem belongs to some class of the W-hierarchy. For instance, if a problem can be solved by a single-tape non-deterministic Turing machine in  a number of steps which only depends on the parameter, then the problem is in W$[1]$. The W$[1]$-membership of  \textsc{Perfect Code} has been proved using such a Turing machine \cite{perfectCode}. When the problem is solved by a multi-tape non-deterministic machine  in  a number of steps which only depends on the parameter, it proves that the problem is in W$[2]$.  
To prove the W$[2]$-membership of $(\sigma,\rho)$-domination for any $\sigma$ and $\rho$ when parameterized by the size of the solution, we introduce an extension  of the multi-tape non-deterministic Turing machine by allowing `blind' transitions, i.e. transitions which do not depend on the symbols pointed out by the heads. We show that the extra capability of doing blind transitions does not change the computational power of the machine in terms of parameterized complexity by proving that the problem \textsc{Short Blind Multi-Tape Turing Machine} is W$[2]$-complete. Blindness of the transitions makes the design of the Turing machine far more easier; moreover it seems that there is no simple and efficient simulation of the blind transitions using the standard Turing machine, even though a (not necessarily simple) efficient simulation exists because of the W$[2]$-completeness of \textsc{Short  Multi-Tape Turing Machine}.   
For these reasons, we believe that the blind Turing machine can be used to prove W$[2]$-membership of other problems, not necessarily related to domination-type problems.

\subsubsection{The paper is organized as follows:} the next section is dedicated to the introduction of the blind multi-tape Turing machine and the proof that the corresponding parameterized problem is W$[2]$-complete. In Section \ref{sigmarho}, several results on the parameterized complexity of $(\sigma,\rho)$-domination are given. In Section \ref{otherdom}, the parameterized complexity of domination-type  problems which do not fall in the $(\sigma, \rho)$-domination framework are given. Finally,  Section \ref{codes} is dedicated to problems from coding theory which are related to domination-type problems with parity conditions.

\begin{figure*}
\begin{center}
\tabcolsep=2pt
\begin{tabular}{c|c|c|c}
\hline
\multicolumn{4}{c}{$(\sigma,\rho)$-Domination}\\
\hline
\hline
Name & $(\sigma,\rho)$ Formulation & Standard & Dual\\
\hline
\textsc{Dominating Set} & $(\mathbb{N},\mathbb{N}^+)$ & $\text{W}[2]$-complete \cite{fptc1} & FPT \cite{sigmarho}\\
\hline
\textsc{Independent Set} & $(\{0\},\mathbb{N})$ & $\text{W}[1]$-complete \cite{fptc2} & FPT \cite{sigmarho}\\
\hline
\textsc{Perfect Code} & $(\{0\},\{1\})$ & $\text{W}[1]$-complete \cite{fptc2,perfectCode} & FPT \cite{sigmarho}\\
\hline
\textsc{Strong Stable Set} & $(\{0\},\{0,1\})$ &\begin{tabular}{c}
\textbf{\text{W}[1]-complete}\\ ($\text{W}[1]$ \cite{sigmarho})
\end{tabular}
& FPT \cite{sigmarho}\\
\hline
\textsc{Total Dominating Set} & $(\mathbb{N}^+,\mathbb{N}^+)$ & \begin{tabular}{c}
\textbf{\text{W}[2]-complete}\\($\text{W}[2]$-hard \cite{fpintractDom}) \end{tabular} & FPT \cite{sigmarho}\\
\hline
\multicolumn{2}{c|}{\begin{tabular}{c} $(\sigma$,\textsc{\text{ODD}}$)$\textsc{-Dominating Set},\\
$\sigma\mathord \in \{$\textsc{\text{ODD},\text{EVEN}}$\}$ \end{tabular}} & $\text{W}[1]$-hard, $\text{W}[2]$\protect\footnotemark[1]  & \begin{tabular}{c}
$\text{W}[1]$-hard \cite{sigmarho},\\ \textbf{\text{W}[2]}
\end{tabular}\\
\hline
\multicolumn{2}{c|}{\begin{tabular}{c}$(\sigma,\text{EVEN})$\textsc{-Dominating Set},\\ 
$\sigma \mathord \in\{\text{ODD},\text{EVEN}\}$   
\\ \end{tabular}} & $\text{W}[2]$\protect\footnotemark[1] & \begin{tabular}{c}$\text{W}[1]$-hard \cite{sigmarho},\\ \textbf{\text{W}[2]}\end{tabular}\\
\hline
\multicolumn{2}{c|}{\begin{tabular}{c}
$(\sigma,\rho)$\textsc{-Dominating Set},\\ when $\sigma,\rho$ recursive
\end{tabular}} & \textbf{\text{W}[2]} & \textbf{\text{W}[2]}\\
\hline
\hline
\multicolumn{4}{c}{Other Domination Problems}\\
\hline
\hline
\multicolumn{2}{c|}{
\begin{tabular}{c}
\textsc{Connected Dominating Set}\\(Dual: \textsc{Maximal Leaf Spanning Tree})
\end{tabular}
} & \begin{tabular}{c}
\textbf{\text{W}[2]-complete}\\($\text{W}[2]$-hard \cite{connectedDomination})
\end{tabular} & FPT \cite{Galbiati199445}\\
\hline
\multicolumn{2}{c|}{\textsc{Digraph Kernel}} & \begin{tabular}{c}
\textbf{\text{W}[2]-complete}\\($\text{W}[2]$-hard \cite{kernelDigraph})
\end{tabular} & Unknown\\
\hline
\hline
\multicolumn{4}{c}{Problems in Coding Theory}\\
\hline
\hline
\multicolumn{2}{c|}{\textsc{Weight Distribution}} &  $\text{W}[1]$-hard,$\text{W}[2]$ \cite{paraCompCode} &
\begin{tabular}{c}
$\text{W}[1]$-hard \cite{sigmarho},\\ \textbf{\text{W}[2]}
\end{tabular}\\
\hline
\multicolumn{2}{c|}{\textsc{Minimum Distance}} & $\text{W}[2]$ \cite{paraCompCode} &
\begin{tabular}{c}
$\text{W}[1]$-hard \cite{sigmarho},\\ \textbf{\text{W}[2]}
\end{tabular}\\
\hline
\multicolumn{2}{c|}{ \begin{tabular}{c}\textsc{Weight Distribution Over} $\mathbb{F}_q$,\\ ($q$ power of prime)\\ \end{tabular}} &  \textbf{\text{W}[1]-hard,\text{W}[2]} & \begin{tabular}{c}\textbf{\text{W}[1]-hard},\\ \textbf{\text{W}[2]}\end{tabular}\\
\hline
\multicolumn{2}{c|}{\begin{tabular}{c}\textsc{Minimum Distance Over} $\mathbb{F}_q$,\\ ($q$ power of prime)\\ \end{tabular}}& \textbf{\text{W}[2]} & \begin{tabular}{c}\textbf{\text{W}[1]-hard},\\ \textbf{\text{W}[2]}\end{tabular}\\\hline
\end{tabular}
\end{center}
\caption{Overview of the parameterized complexity of domination-type problems and some problems from coding theory. The `Standard' column corresponds to a parameterization by the size of the $(\sigma, \rho)$-dominating set (or the Hamming weight for the problems in coding theory). In this column we consider the problem of $(\sigma,\rho)$-dominating set of size $k$ and at most $k$ except for \textsc{Independent Set} and \textsc{Strong Stable Set}  which are  considered for the equality case only. The `Dual' column corresponds to the dual parameterization, e.g. parameterized by the size of the $(\sigma, \rho)$-dominated set for domination-type problems. Our contributions, depicted in bold font, improve the results indicated in parenthesis. 
\label{tabDomination}}
\end{figure*}
\footnotetext[1]{When parameterized by the size of the dominating set, the parameterized complexity of $(\sigma,\text{ODD})$-domination (resp. $(\sigma, \text{EVEN})$-domination) for $\sigma\in \{\text{ODD},\text{EVEN}\}$ can be derived from the parameterized complexity of \textsc{OddSet} (resp. \textsc{EvenSet}) which has been proved in \cite{sigmarho}. }

\section{Blind Multi-Tape Non-Determinis\-tic Turing Machine}

A \emph{blind} Turing machine is a Turing Machine able to do `blind' transitions, i.e. transitions which  do not depend on  the symbol under the head.
Blind transitions are of interest in the multi-tape case when the size of the Turing machine (i.e. the number of defined transitions) matters, since a single blind transition can be seen as a shortcut for up to $|\Gamma|^m$ transitions, where $\Gamma$ is the alphabet and $m$ the number of tapes. For the description of the transitions of a blind m-tape Turing Machine $M=(Q,\Gamma,\Delta,\Sigma,b,q_I,Q_A)$, we introduce a neutral symbol `$\tspace$' and define the transitions as:
$\Delta \subseteq \underline \Gamma^m \times Q \times \underline \Gamma^m \times Q \times \{(-1),0,(+1)\}^m$, where $\underline \Gamma =  \Gamma \cup \{\tspace\}  $. A neutral symbol on the left part means that the transition can be applied whatever the symbol of the alphabet on the corresponding tape is, and a neutral symbol on the right part means that the symbol on the tape is kept. For instance $\langle \tspace\ \tspace,q,aa,q',00\rangle$ is a blind transition of a $2$-tape machine which, whatever the symbols under the heads are, changes the internal state $q$ into $q'$ and writes `$a$' on both tapes. $\langle \tspace^m,q,\tspace^m,q,1^m\rangle$ (where $\sigma^m$ stands for $\sigma,\ldots,\sigma$, $m$ times) is a blind transition of a $m$-tape machine which moves all the $m$ heads to the right without modifying the content of the tapes. 

The parameterized problem associated with the Blind Multi-Tape Non-Deter\-mi\-nis\-tic Turing Machines is defined as:

\vspace{0.3cm}
\noindent \textsc{Short Blind Multi-Tape Non-Deter\-ministic Turing Machine Computation}\\
Input: A blind $m$-tape non-deterministic Turing Machine $M$, a word $w$ on the alphabet $\Sigma$, an integer $k$.\\
Parameter: $k$.\\
{Question: Is there a computation of $M$ on $w$ that reaches an accepting state in at most $k$ steps?}
\vspace{0.3cm}

\begin{theorem}
\label{theorem:SBMTNDTMW2}
\textsc{Short Blind Multi-Tape Non-Determi\-nistic Turing Machine Computation} is complete for $\text{W}[2]$.
\end{theorem}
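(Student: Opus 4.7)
The plan is to establish both W$[2]$-hardness and W$[2]$-membership for \textsc{Short Blind Multi-Tape Non-Deterministic Turing Machine Computation}.

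\textbf{Hardness} is immediate. Any standard multi-tape non-deterministic Turing machine is a blind machine whose transition relation contains no occurrence of the neutral symbol $\_$. Consequently the identity map is a trivial FPT reduction from \textsc{Short Multi-Tape Non-Deterministic Turing Machine Computation}, which is known to be W$[2]$-complete \cite{turingWay}, to our problem.

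For \textbf{membership}, the plan is an FPT reduction in the opposite direction, from our problem to \textsc{Short Multi-Tape Non-Deterministic Turing Machine Computation}. Given a blind machine $M$ with $m$ tapes over alphabet $\Sigma$, input $w$, and parameter $k$, the reduction must produce a non-blind multi-tape NDTM $M'$ of size polynomial in $|M|+|w|$, together with an input $w'$ and a new parameter $k' = f(k)$ depending only on $k$, such that $M$ accepts $w$ in at most $k$ steps iff $M'$ accepts $w'$ in at most $k'$ steps. The main obstacle is that a single blind transition of $M$ that leaves $j$ tapes unconstrained stands in for up to $|\Sigma|^{j}$ ordinary transitions, so naively unfolding blind transitions into non-blind ones blows up the size of the machine exponentially in $m$, which is part of the input.

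To circumvent this blow-up I would equip $M'$ with a few auxiliary tapes. One of them non-deterministically records, at the beginning of the computation, the sequence of at most $k$ indices of transitions of $\Delta$ that $M$ is to apply; others serve as scratch space on which heads of $M'$ are parked on known symbols. The simulation maintains the invariant that at every step of $M'$ all but a constant number of heads sit on cells whose contents are already known, which bounds the number of distinct left-hand sides of transitions of $M'$ that must be instantiated by a polynomial in $|\Sigma|$ rather than $|\Sigma|^{m}$. Under this invariant, each step of $M$ is simulated by at most a function-of-$k$ number of steps of $M'$, so $k' = f(k)$ depends only on $k$, while the machine description stays of polynomial size. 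The substantive technical work lies in the parking/unparking subroutine: implementing it with non-blind transitions and verifying that the invariant survives across simulated steps, without ever reintroducing the $|\Sigma|^{m}$ blow-up, is what I expect to be the main obstacle. Putting hardness and membership together yields the claimed W$[2]$-completeness.
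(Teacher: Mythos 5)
Your hardness half is fine and coincides with the paper's (the non-blind machine is the special case with no neutral symbols). The gap is in the membership half, which is exactly where the paper does something different: it does not reduce back to \textsc{Short Multi-Tape Non-Deterministic Turing Machine Computation} but directly to \textsc{Weighted Weft-2 Circuit Satisfiability}, with weight-$k$ inputs selecting the $k$ transitions, bounded fan-in gates $\beta(i,l,t)$ and $\sigma(i,l,s,t)$ of depth $O(k)$ recomputing head positions and cell contents, and large gates checking consistency. The reason the authors take this route is precisely the obstacle your sketch leaves unresolved. A single blind transition may move heads and write on \emph{all} $m$ tapes simultaneously (this is the whole point of blindness; Phase~3 of the $(\sigma,\rho)$-domination machine does exactly this). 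To simulate such a step, your non-blind $M'$ must either (a) have its heads on the corresponding content cells of all $m$ mirrored tapes, in which case the left-hand sides of its transitions must enumerate the unknown symbols on the blind tapes, giving $|\Sigma|^{\Theta(m)}$ transitions -- the very blow-up you are trying to avoid; or (b) keep those heads parked on cells with known symbols, in which case the $m$ head positions (and deferred writes) must be updated, recorded, or visited by some other mechanism, and any such bookkeeping done head-by-head costs $\Omega(m)$ (or $\Omega(km)$, if heads must walk to their virtual positions) steps of $M'$ per simulated step, so $k'$ is no longer a function of $k$ alone and the reduction ceases to be an FPT reduction. Your invariant that all but constantly many heads sit on ``known'' cells therefore cannot be restored after such a step within $f(k)$ steps; moreover ``known content'' is a run-dependent notion (it depends on $w$ and on the guessed transition sequence), so it cannot be hard-wired into the transition table at construction time.

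Guessing the sequence of at most $k$ transition indices up front is sound (it mirrors the paper's input wires $x[i,j]$), but the verification of that guess is the substantive content, and it involves up to $k\cdot m$ read-consistency constraints spread over $m$ tapes; a machine restricted to $f(k)$ steps cannot check them sequentially, and checking them in parallel requires transitions whose left-hand sides specify the symbols on all tapes involved, which reintroduces the exponential enumeration on the blind tapes. The circuit reduction sidesteps this mismatch because a circuit is allowed polynomially many gates (one per step, tape, position and symbol), whereas your simulating machine is only allowed $f(k)$ steps. Note that an FPT reduction from the blind to the non-blind machine problem does exist in the abstract (both problems are W$[2]$-complete once the theorem is proved), but it is obtained by composing circuit reductions, not by the direct parking/unparking simulation you propose; as sketched, that simulation would fail, so the membership half of your argument needs a different proof, such as the paper's circuit construction.
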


\begin{proof}
The hardness for $\text{W}[2]$ comes from the  non-blind case which has been proven to be complete for $\text{W}[2]$ \cite{mTNDTM}. The proof of the $\text{W}[2]$-membership is similar to the non-blind case \cite{turingWay} and consists in a reduction to \textsc{Weighted Weft-2 Circuit Satisfiability}. This  problem consists in deciding whether a weft-2 mixed-type boolean circuit of depth bounded by a function of the parameter $k$, accepts some input of Hamming weight $k$.
A mixed type circuit is composed of `small' gates of fan-in $\le$ 2 and `large' AND and OR gates of unbounded fan-in.  
The weft of the circuit is the maximum number of unbounded fan-in gates on an input/output path.

First, we transform $M$ into a machine which accepts its input in (exactly) $k$ steps iff $M$ accepts its input in at most $k$ steps. To this end, all accepting states of $M$ are merged into a fresh accepting state $q_{A}$ and the blind transition $\langle \tspace^m,q_{A},\tspace^m,q_{A}, 0^m\rangle$ is added.

In the following, a weft-$2$ mixed circuit $C$ is constructed in such a way that the accepted inputs correspond to the sequences of $k$ transitions of a machine $M$ from the initial state to the accepting state. The set $\Delta$ of the transitions of $M$ are indexed by $j\in [1,|\Delta|]$.
The symbols of $\underline{\Gamma}$ are indexed by $s\in [0,|{\Gamma}|]$, where $0$ is the index of the blank symbol and $|{\Gamma}|$ is the index of the neutral symbol `$\tspace$'. Let $x[i,j]$ for $i\in [1,k], j\in [1, |\Delta|]$ and $x[-1,-1]$ be the input wires of the circuit. For $i\in [1,k], j\in [1, |\Delta|]$, $x[i,j]$ is true if and only if the $i^{th}$ transition of the sequence is the transition indexed by $j$, $x[-1,-1]$ represents the constant $0$.  
The following gates encode some information about the transitions of $M$: $\forall i\in [1,k], \forall q\in [1, |Q|], \forall s\in [0, |\Gamma|], \forall t\in [1, m], \forall d\in \{\mone,0, 1\}$,
\begin{itemize}
\item $\tau_o(i,q)$ outputs true iff the initial state on the $i^{th}$ transition is $q$: $$\tau_o(i,q):=\bigvee_{j\in J_{q}} x[i,j]$$ where $J_{q}=\Delta \cap ({\underline \Gamma^m} \mytimes \{q\}\mytimes {\underline \Gamma^m} \mytimes Q \mytimes \{\mone,0, 1\}^m)$

\item $\tau_n(i,q)$ outputs true iff the final state on the $i^{th}$ transition is $q$:
$$\tau_n(i,q):=\bigvee_{j\in J'_{q}} x[i,j]$$ where $J'_{q}=\Delta \cap ({\underline \Gamma^m} \mytimes Q\mytimes {\underline \Gamma^m} \mytimes \{q\} \mytimes \{\mone,0, 1\}^m)$

\item $\sigma_o(i,s,t)$ outputs true iff either the symbol read by the $i^{th}$ transition on  tape $t$ is $s$, or the transition does not read the symbol on tape $t$ in the `blind' case $s=|{\Gamma}|$: $$\sigma_o(i,s,t):=\bigvee_{j\in J_{s,t}} x[i,j]$$ where $J_{s,t}=\Delta \cap ({\underline \Gamma} ^{t-1}\mytimes \{s\}\mytimes {\underline \Gamma} ^{m-t}\mytimes Q\mytimes {\underline \Gamma^m} \mytimes Q \mytimes \{\mone,0, 1\}^m)$

\item $\sigma_n(i,s,t)$ outputs true iff either the symbol written by the $i^{th}$ transition on  tape $t$ is $s$, or the transition does not write any symbol on tape $t$ in the `blind' case $s=|{\Gamma}|$: $$\sigma_n(i,s,t):=\bigvee_{j\in J'_{s,t}} x[i,j]$$ where $J'_{s,t}=\Delta \cap ({\underline \Gamma^m} \mytimes Q\mytimes {\underline \Gamma}^{t-1}\mytimes\{s\}\mytimes {\underline \Gamma} ^{m-t}\mytimes Q \mytimes \{\mone,0, 1\}^m)$

\item  $\mu(i,d,t)$ outputs true iff the head of $t$ has a movement $d$ on the $i^{th}$ transition: $$\mu(i,d,t):=\bigvee_{j\in J_{d,t}} x[i,j]$$ where 
$J_{d,t}  =  \Delta \cap ({\underline \Gamma^m} \mytimes Q\mytimes {\underline \Gamma^m} \mytimes Q \mytimes \{\mone,0, 1\}^{t-1}\mytimes \{d\}\mytimes \{\mone,0, 1\}^{m-t})$
\end{itemize}
Notice that most of these gates require unbounded fan-in OR gates in general.

The following gates encode the position of the heads and all the symbols in every cell of the tapes. These gates guarantee the correctness of the transition sequence. $\forall i\in [1,k],\forall  l \in [\text{-}k,k],\forall  t \in [1,m], \forall s \in [0,|\Gamma|-1]$,
\begin{itemize}
\item $\beta(i,l,t)$ outputs true iff the head of  tape $t$ is at position $l$ before step $i$. Since the transition sequence is of length $k$,  $l$ is  in the interval $[\text{-}k,k]$. The gate is defined as:
\begin{eqnarray*}
\nonumber
\beta(0,l,t)&:=&
\begin{cases}
1 & \text{if } l=0\\
0 & \text{otherwise}\\
\end{cases}\\
&&\\
\beta(i,l,t)&:=&(\beta(i\mathord -1,l,t) \wedge \mu(i\mathord -1,0,t))\\
&&\vee ~( \beta(i\mathord -1,l\mathord -1,t) \wedge \mu(i\mathord-1,1,t))\\
&&\vee ~(\beta(i\mathord -1,l\mathord+1,t) \wedge \mu(i\mathord-1,\text-1,t))
\end{eqnarray*}

\item $\sigma(i,l,s,t)$
outputs true iff the cell $l$ of tape $t$ contains the symbol $s$ before step $i$.  Let $w$ be the input word of the machine, located on tape 1.
\begin{eqnarray*}
\nonumber
\sigma(0,l,s,t)&:=&
\begin{cases}
1 & \text{if } \big((s \text{ is the index of } w[l]) \wedge (t=1) \wedge (0 \leq l <|w|)\big)\\
1 & \text{if } \big((s=0) \wedge (t\neq 1 \vee l<0 \vee l \ge |w|)\big)\\
0 & \text{otherwise}\\
\end{cases}\\&&\\
\sigma(i,l,s,t)&:=&(\neg\beta(i-1,l,t) \wedge \sigma(i-1,l,s,t))\\&&\vee~ (\beta(i-1,l,t) \wedge \sigma_n(i-1,s,t))\\
&&\vee~ (\beta(i-1,l,t) \wedge \sigma_n(i-1,|\Gamma|,t) \wedge \sigma(i-1,l,s,t))
\end{eqnarray*}
One can see in the definition of $\sigma(i,l,s,t)$ for $i>0$ that there are three different cases: either the head was not pointing at the cell $l$, so the symbol remains unchanged; or the head was pointing at the cell $l$, and the symbol has been written in the previous step; or the head was on the cell but the transition was blind, so the symbol was already $s$. 
\end{itemize}

Notice that these gates have a bounded fan-in, and that the recursion is on the number of transitions, so  their depth is bounded by the parameter $k$. Notice also that there is a polynomial number of such gates since there are $k\cdot 2k\cdot m$, $\beta$ gates and $k\cdot 2k\cdot |\underline{\Gamma}|\cdot m$, $\sigma$ gates.

All the information about the computation path has been encoded so the remaining gates check the validity of this transition sequence:
\begin{itemize}
\item $E:=E_0 \wedge E_1 \wedge E_2 \wedge E_3 \wedge E_4$ is the final  gate of the circuit. As a consequence, for any input accepted by the circuit, the following  conditions $E_0,\ldots, E_4$ must be  satisfied. 
\item $E_0:=\neg x[\mone,\mone]$ ensures that $x[\mone,\mone]$ is the constant $0$, so $\neg x[\mone,\mone]$ is the constant $1$ used by the other gates.
\item $E_1$ ensures that for every $i$, at most one wire among the block $x[i,1], \dots,$ $x[i,|\Delta|]$ is true, which means that at each step at most one transition is performed. $E_1$ is defined as:
$$E_1:=\bigwedge_{i=1}^{k} \bigwedge_{j=1}^{|\Delta|} \bigwedge_{j'=1,j'\neq j}^{|\Delta|} (\neg x[i,j]\vee\neg x[i,j'])$$
\item $E_2$ ensures that the initial state of each step is equal to the final state of the previous step. $E_2$ is defined as:
$$E_2:=\bigwedge_{i=2}^{k} \bigwedge_{q=1}^{|Q|} (\neg \tau_n(i-1,q) \vee \tau_o(i,q))$$
Notice that this formula encodes: $\forall i\in [2,k],\forall q{\in} [1,|Q|],\tau_n(i-1,q) \Rightarrow \tau_o(i,q)$.
\item $E_3$ ensures that the symbol read by a transition on a tape is either the one pointed out by the head or any symbol when the transition is blind. 
$$E_3:=\bigwedge_{i=1}^{k} \bigwedge_{t=1}^{m} \bigwedge_{l=-k}^{k} \bigwedge_{s=0}^{|\Gamma|-1} (\neg\beta(i,l,t) \vee \neg\sigma(i,l,s,t) \vee \sigma_o(i,s,t) \vee \sigma_o(i,|\Gamma|,t))$$
Notice that this formula encodes: $\forall i\in[1,k],\forall l\in [-k,k],\forall s\in [0,|\Gamma|],\forall t\in [1,m], (\beta(i,l,t) \wedge \sigma(i,l,s,t)) \Rightarrow (\sigma_o(i,s,t)\vee\sigma_o(i,|\Gamma|,t))$.
\item $E_4$ ensures that the initial state on the first step is $q_0$, the initial state of $M$ of index $0$, and that the last state is the accepting state $q_A$ of index $|Q|-1$. So $E_4$ is defined as:$$E_4:=\tau_o(0,0) \wedge \tau_n(k-1,|Q|-1)$$
\end{itemize}

All the $E_i, i\in [0,4]$ gates are independent, so every input-output path goes through at most one of these unbounded fan-in gates. Since it is also the case for the gates encoding the transitions, and that the $\sigma$ and $\beta$ gates are bounded fan-in gates, the weft of this circuit is 2. Since the only recursive gates have a depth bounded by the parameter, the depth of this circuit is bounded by the parameter. Notice also that the number of gates is polynomial in $|M|$. This circuit outputs true if and only if $M$ has an accepting computation path of length $k$ on the word $w$, i.e. if and only if $M$ has an accepting computation path of length at most $k$ on $w$. Therefore, \textsc{Short Blind Multi-Tape Non-Deterministic Turing Machine Computation} belongs to $\text{W}[2]$.\hfill$\Box$
\end{proof}

What is interesting is that although the blindness of the transition does not change the computational power of short multi-tape non-deterministic Turing machines, there is no simple way to simulate the blind machine with the original one. Indeed, intuitively, a blind transition on $m$ tapes is a short-cut for up to $|\Gamma|^m$ transitions, so a machine with no blind transition may have an exponentially larger size. A tape-by-tape (sequentialization) simulation would avoid this blow up of the number of transitions, but will not reach an accepting state within a number of steps depending only on the parameter.

\section{Parameterized complexity of $(\sigma,\rho)$-Domination}
\label{sigmarho}
In this section, we prove the central result of the paper: for any recursive sets $\sigma$ and $\rho$,  $(\sigma,\rho)$-domination belongs to W$[2]$ for both \emph{standard} and \emph{dual} parameterizations, i.e. the four problems which consists in deciding whether a graph has a $(\sigma,\rho)$-dominating set of size $k$; of size $n-k$; of size at most $k$; and of size at least $n-k$ are in W$[2]$ with respect to $k$. To this end, we show that for any $\sigma$, $\rho$ recursive sets,  these problems of $(\sigma, \rho)$-domination can be decided using a blind multi-tape Turing machine. The only assumption on $\sigma$ and $\rho$ is that they are  recursive, i.e. there exists a Turing machine which decides whether a given integer $j$ belongs to $\sigma$ (resp. $\rho$). 
\vspace{0.15cm}\\
\noindent$(\sigma,\rho)$-\textsc{Dominating Set of Size at Most} $k$:\\
Input: A graph $G=(V, E)$, an integer $k$.\\
Parameter: $k$.\\
Question: Is there a $(\sigma,\rho)$-dominating set $D\subseteq V$ such that $|D|\leq k$?
\vspace{0.15cm}

$(\sigma,\rho)$-\textsc{Dominating Set of Size} $k$, $(\sigma,\rho)$-\textsc{Dominating Set of Size} $n{-}k$, and $(\sigma,\rho)$-\textsc{Dominating Set of Size at Least} $n-k$ are defined likewise.

\begin{theorem}\label{sigmarho:w2}
For any recursive sets of integers $\sigma$ and $\rho$, $(\sigma,\rho)$-\textsc{Dominating Set of Size at Most} $k$ and $(\sigma,\rho)$-\textsc{Dominating Set of Size} $k$ belong to $\text{\emph{W}}[2]$.
\end{theorem}
\begin{proof} We prove that $(\sigma,\rho)$-\textsc{Dominating Set of Size at Most} $k$ is in W$[2]$, the proof that $(\sigma,\rho)$-\textsc{Dominating Set of Size} $k$ belongs to W$[2]$ is similar. 
Given two recursive sets $\sigma, \rho\subseteq \mathbb N$,  an integer $k$, and a graph $G=(\{v_1,\ldots, v_n\},E)$, we consider the following  $(n+1)$-tape Turing machine $M$, which decides whether $G$ has a $(\sigma,\rho)$-dominating set of size at most $k$. $M$ works in 3 phases (see an example in Figure \ref{figexample}): $(1)$ a subset $D$ of size at most $k$ is non-deterministically chosen and written on the first tape. Moreover, the first $k+1$ cells of the following $n$  tapes -- one tape for each vertex of the graph -- are filled with $0$s and $1$s  such that the $i^{th}$ cell of each tape is $1$ iff $i\in \rho$; $(2)$ The content of the tapes associated with the vertices in $D$ is  removed and replaced by the characteristic vector of $\sigma$, i.e. the $i^{th}$ cell is $1$ iff $i\in \sigma$. At the end of this second phase, all heads are located on the leftmost non-blank symbol; $(3)$ For each vertex $v$ in $D$, the heads of all the tapes associated with a neighbor of $v$ move to the right. At the end of this third phase, for every $v\in D$ (resp. $v\in \overline D$),  the head of the tape associated with $v$ reads $1$ iff $|N(v)\cap D| \in \sigma$ (resp. $|N(v)\cap D| \in \rho$), so $D$ is a $(\sigma,\rho)$-dominating set iff all heads but the first one read a symbol $1$. 

The actual description of the blind $(n+1)$-tape non-determi\-nistic Turing machine  is as follows: $M=(Q,\Gamma,\Delta,\Sigma,b,q_I,Q_A)$, where $\Gamma=\{\Box,0,1, v_1,\ldots ,v_n\}$, $b=\Box$, $\Sigma= \varnothing$, $Q=\{q_{r,s}~|~r\in[1,n+1], s\in[0,k]\} \cup\{q^{\textup{ret}}_{s}~|~s\in[1,k+1]\} \cup\{q^{\textup{sig}}_{i,s}~|~i\in[1,n],s\in[0,k]\} \cup\{q^{\textup{sig}},q^{\textup{end}}_\rho,q^{\textup{end}}_\sigma,q^{\textup{read}},q_A\}$,  $q_I=q_{1,0}$ and $Q_A=\{q_A\}$. Given an integer set $A$, $\overline{A}$ is the complementary set of $A$, it is defined as the only set such that $A \cap \overline{A}=\varnothing$ and $A \cup \overline{A}=\mathbb{N}$.
 The initial word $w$ is the empty word, so every cell initially contains the blank symbol $\Box$. The transitions are:\\

\noindent \textit{Phase 1 -- Initialization of D and $\rho$:}

\vspace{0.2cm}
\begin{tabular}{l cl}
$\langle \Box\Box^n,q_{r,s},v_i1^n,q_{i+1,s+1},(+1)(+1)^n\rangle$ &~~& 
$r\in[1,n], s\in\rho\cap[0,k-1], i\in[r,n]$\\
$\langle \Box\Box^n,q_{r,s},v_i0^n,q_{i+1,s+1},(+1)(+1)^n\rangle$ &&
$r\in[1,n], s\in\overline{\rho}\cap[0,k-1], i\in[r,n]$\\
$\langle \Box\Box^n,q_{r,s},\Box1^n,q_{r,s+1},0(+1)^n\rangle$ &&
$r\in[1,n+1], s\in\rho\cap[0,k-1]$\\
$\langle \Box\Box^n,q_{r,s},\Box0^n,q_{r,s+1},0(+1)^n\rangle$ &&
$r\in[1,n+1], s\in\overline{\rho}\cap[0,k-1]$\\
$\langle \Box\Box^n,q_{r,k},\Box1^n,q^{\textup{end}}_\rho,(-1)(-1)^n\rangle$ &&
$r\in[1,n+1]$, if $k\in\rho$\\
$\langle \Box\Box^n,q_{r,k},\Box0^n,q^{\textup{end}}_\rho,(-1)(-1)^n\rangle$ &&
$r\in[1,n+1]$, if $k\in\overline{\rho}$\\
$\langle v_i\tspace^n,q^{\textup{end}}_\rho,v_i\tspace^n,q^{\textup{end}}_\rho,(-1)(-1)^n\rangle$ &&
$i\in[1,n]$\\
$\langle \Box 1^n,q^{\textup{end}}_\rho,\Box 1^n,q^{\textup{end}}_\rho,0(-1)^n\rangle$\\
$\langle \Box 0^n,q^{\textup{end}}_\rho,\Box 0^n,q^{\textup{end}}_\rho,0(-1)^n\rangle$\\
$\langle \Box\Box^n,q^{\textup{end}}_\rho,\Box\Box^n,q^{\textup{sig}},(+1)(+1)^n\rangle$
\end{tabular}\\

\vspace{0.2cm}
The state $q_{r,s}$ means that $s-1$ vertices  among $v_1, \ldots, v_{r-1}$ have already been written on the first tape, $q^{\textup{end}}_\rho$ that the initializations of $D$ and $\rho$ are done and that the heads are going back to the leftmost non blank cell on every tape.\\

\noindent \textit{Phase 2 -- Initialization of $\sigma$:}

\vspace{0.2cm}

{
\begin{tabular}{l l}
\noindent $\langle v_i\tspace^n,q^{\textup{sig}},v_i\tspace^n,q^{\textup{sig}}_{i,0},00^n\rangle$ &$~ \quad i\in[1,n]$\\
$\langle v_i\tspace^n,q^{\textup{sig}}_{i,s},v_i\tspace^{i-1}1\tspace^{n-i},q^{\textup{sig}}_{i,s+1},0(+1)^n\rangle$ & $~\quad i\in[1,n], s\in\sigma\cap[0,k-1]$\\
$\langle v_i\tspace^n,q^{\textup{sig}}_{i,s},v_i\tspace^{i-1}0\tspace^{n-i},q^{\textup{sig}}_{i,s+1},0(+1)^n\rangle$ & $~\quad i\in[1,n], s\in\overline{\sigma}\cap[0,k-1]$\\
$\langle v_i\tspace^n,q^{\textup{sig}}_{i,k},v_i\tspace^{i-1}1\tspace^{n-i},q^{\textup{ret}}_{1},(+1)0^n\rangle$ & $~\quad i\in[1,n]$, if $k\in \sigma$\\
\end{tabular}

\begin{tabular}{l l}
$\langle v_i\tspace^n,q^{\textup{sig}}_{i,k},v_i\tspace^{i-1}0\tspace^{n-i},q^{\textup{ret}}_{1},(+1)0^n\rangle$ & $~\quad i\in[1,n]$, if $k\in\overline{\sigma}$\\
$\langle v_i\tspace^n,q^{\textup{ret}}_{s},v_i\tspace^{n},q^{\textup{ret}}_{s+1},0(-1)^n\rangle$ & $~\quad i\in[1,n],s\in[1,k]$\\
$\langle v_i\tspace^n,q^{\textup{ret}}_{k+1},v_i\tspace^n,q^{\textup{sig}}_{i,0},00^n\rangle$ & $~\quad i\in[1,n]$\\
$\langle \Box\tspace^n,q^{\textup{ret}}_{1},\Box\tspace^n,q^{\textup{end}}_\sigma,(-1)0^n\rangle$\\
$\langle v_i\tspace^n,q^{\textup{end}}_\sigma,v_i\tspace^n,q^{\textup{end}}_\sigma,(-1)0^n\rangle$ & $~\quad i\in[1,n]$\\
$\langle \Box\tspace^n,q^{\textup{end}}_\sigma,\Box\tspace^n,q^{\textup{read}},(+1)0^n\rangle$
\end{tabular}\\
}

\vspace{0.2cm}

The state $q^{\textup{sig}}_{s,i}$ means that the first $s$ symbols of the characteristic vector of $\sigma$ have been written on the tape associated with the vertex $v_i$.\\

\textit{Phase 3: Neighborhood Checking}

\vspace{0.2cm}

\noindent $\langle v_i\tspace^n,q^{\textup{read}},v_i\tspace^n,q^{\textup{read}},(+1)d_1\dots d_{n}\rangle$ 
$~\quad i\in[1,n]$, where $d_t{=}\begin{cases}{+}1 &\text {if $v_t{\in} N(v_i)$} \\ 0&\text{otherwise}\end{cases}$\\
$\langle \Box 1^n,q^{\textup{read}},\Box 1^n,q_A,00^n\rangle$\\

Since $\sigma$ and $\rho$ are recursive, their characteristic vector of length $k$ can be computed and written on the tapes in time $f(k)$ for some fixed function $f$. In the first phase $D$ and $\rho$ of size $k$ are written on the tapes and then the heads comes back so there are $2(k+1)$ steps. In the second phase $\sigma$ of size $k$ is written sequentially on at most $k$ tapes corresponding to the elements of $D$ and then the heads come back so there are at most $k(2k)$ steps. Finally the third phase goes through D and moves the heads on the tapes of the neighbours so there are at most $k$ steps. The number of transitions is polynomial in $|G|$ and the acceptance is made in at most $2(k+1)+k(2k+2)$ steps if a $(\sigma,\rho)$-dominating set of size at most $k$ exists. As a consequence,  $(\sigma,\rho)$-\textsc{Dominating Set of Size at Most} $k$ belongs to $\text{W}[2]$. Notice that the use of blind transitions in the third phase is crucial. Indeed, a naive simulation of any of these blind transitions uses $2^n$ non-blind transitions since the transition should be applicable for any of the $2^n$ possible configurations read by the heads of the machine. \hfill $\Box$
\end{proof}

\begin{figure*}
\begin{center}
\tabcolsep=2pt
\begin{tabular}{c}

$\vcenter{
\ifx\JPicScale\undefined\def\JPicScale{0.6}\fi
\unitlength \JPicScale mm
\begin{picture}(40,33)(0,0)
\linethickness{0.3mm}
\put(6,-1){\text{$v_3$}}
\linethickness{0.3mm}
\put(0,18.5){\text{$v_2$}}
\linethickness{0.3mm}
\put(37,15.5){\text{$v_5$}}
\linethickness{0.3mm}
\put(18,29){\text{$v_1$}}
\linethickness{0.3mm}
\put(30,-1){\text{$v_4$}}
\linethickness{0.3mm}
\multiput(5,16)(0.16,0.12){92}{\line(1,0){0.16}}
\linethickness{0.3mm}
\multiput(20,27)(0.17,-0.12){92}{\line(1,0){0.17}}
\linethickness{0.3mm}
\multiput(29,2)(0.12,0.24){58}{\line(0,1){0.24}}
\linethickness{0.3mm}
\put(11,2){\line(1,0){18}}
\linethickness{0.3mm}
\multiput(5,16)(0.12,-0.28){50}{\line(0,-1){0.28}}

{
\put(29,2){\circle*{2}}

\linethickness{0.3mm}
\put(11,2){\circle*{2}}


\linethickness{0.3mm}
\put(5,16){\circle*{2}}

\linethickness{0.3mm}
\put(36,16){\circle*{2}}

\linethickness{0.3mm}
\put(20,27){\circle*{2}}
}
\end{picture}}$\\
(a)\\

\end{tabular}
\begin{tabular}{c c c}

\begin{tabular}{c|c|c|c|c|c|c}
\multicolumn{7}{c}{~}\\
\hline

... & $\Box$ & $\underline{v}_1$ & $v_4$ & $\Box$ & \multicolumn{2}{c}{...}\\
\hline
\hline
... & $\Box$ & $\underline{0}$ & $1$ & $1$ & $\Box$ & ...\\
\hline
\hline
... & $\Box$ & $\underline{0}$ & $1$ & $1$ & $\Box$ & ...\\
\hline
\hline
... & $\Box$ & $\underline{0}$ & $1$ & $1$ & $\Box$ & ...\\
\hline
\hline
... & $\Box$ & $\underline{0}$ & $1$ & $1$ & $\Box$ & ...\\
\hline
\hline
... & $\Box$ & $\underline{0}$ & $1$ & $1$ & $\Box$ & ...\\
\hline
\end{tabular}~~&~~
\begin{tabular}{c|c|c|c|c|c|c}
\multicolumn{7}{c}{~}\\
\hline
... & $\Box$ & $\underline{v}_1$ & $v_4$ & $\Box$ & \multicolumn{2}{c}{...}\\
\hline
\hline
... & $\Box$ & $\underline{1}$ & $0$ & $0$ & $\Box$ & ...\\
\hline
\hline
... & $\Box$ & $\underline{0}$ & $1$ & $1$ & $\Box$ & ...\\
\hline
\hline
... & $\Box$ & $\underline{0}$ & $1$ & $1$ & $\Box$ & ...\\
\hline
\hline
... & $\Box$ & $\underline{1}$ & $0$ & $0$ & $\Box$ & ...\\
\hline
\hline
... & $\Box$ & $\underline{0}$ & $1$ & $1$ & $\Box$ & ...\\
\hline
\end{tabular}~~&~~
\begin{tabular}{c|c|c|c|c|c|c}
\multicolumn{7}{c}{~}\\
\hline
... & $\Box$ & $v_1$ & $v_4$ & $\underline{\Box}$ & \multicolumn{2}{c}{...}\\
\hline
\hline
... & $\Box$ & $\underline{1}$ & $0$ & $0$ & $\Box$ & ...\\
\hline
\hline
... & $\Box$ & $0$ & $\underline{1}$ & $1$ & $\Box$ & ...\\
\hline
\hline
... & $\Box$ & $0$ & $\underline{1}$ & $1$ & $\Box$ & ...\\
\hline
\hline
... & $\Box$ & $\underline{1}$ & $0$ & $0$ & $\Box$ & ...\\
\hline
\hline
... & $\Box$ & $0$ & $1$ & $\underline{1}$ & $\Box$ & ...\\
\hline
\end{tabular}
\\
(b) & (c) & (d)\\

\end{tabular}
\end{center}
\caption{Computation of $(\{0\},\mathbb{N}^+)$\textsc{-Dominating Set Of Size At Most} $k$ on a blind multitape Turing machine whith $k=2$ on $C_5$(see proof of Theorem \ref{sigmarho:w2}). (a) Input graph; (b) State of the machine at the end of phase (1). The candidate set $D$ is on the first tape, the other tapes are initialized according to $\rho$; (c)  End of phase (2): the tapes associated with vertices in $D$ are now initialized according to $\sigma$; (d) End of phase (3): all heads (underlined symbols) read $1$, so $\{v_1,v_4\}$ is a $(\{0\},\mathbb{N}^+)$-dominating set.}
\label{figexample}
\end{figure*}

\begin{theorem}
For any recursive sets of integers $\sigma$ and $\rho$, $(\sigma,\rho)$-\textsc{Dominating Set of Size at Least} $n-k$ and $(\sigma,\rho)$-\textsc{Dominating Set of Size} $n-k$ belong to $\text{\emph{W}}[2]$.
\end{theorem}

\begin{proof} We prove that $(\sigma,\rho)$-\textsc{Dominating Set of Size at Least} $n-k$ is in W$[2]$, the proof that $(\sigma,\rho)$-\textsc{Dominating Set of Size} $n-k$ belongs to W$[2]$ is similar. To decide whether a given graph $G$ has a $(\sigma,\rho)$-dominating set of size at least $n\mathord -k$, we slightly modify the blind Turing machine used in the proof of Theorem \ref{sigmarho:w2} in such a way that at the end of phase (2), the first tape contains the description of a set $D$ of size at most $k$, and for any $v\in D$ (resp. $v\notin D$),  the $i^{th}$ cell of the tape associated with $v$ is $1$ if $\delta(v) \mathord - i \in \rho$ (resp. $\delta(v) \mathord - i \in \sigma)$ and $0$ otherwise, where $\delta(v)$ is the degree of $v$. 
Therefore, the machine reaches the accepting state if there exists a set $D$ of size at most $k$ such that  $\forall v\in D$, $\delta(v)-|N(v)\cap D|\in \rho$ and $\forall v\in V\setminus  D$, $\delta(v)-|N(v)\cap D|\in \sigma$. Since for any $v\in V$, $|N(v)\cap (V\setminus D)| = \delta(v)-|N(v)\cap D|$, $V\setminus D$ is a $(\sigma,\rho)$-dominating set of size at least $n\mathord-k$.\hfill $\Box$
\end{proof}

For any recursive sets $\sigma$ and $\rho$, $(\sigma,\rho)$-domination problems are in W$[2]$, but for some particular instances of $\sigma$ and $\rho$ this general result can be refined. In particular, we show that when $\sigma=\{0\}$ and $\rho=\{0,1\}$, the problem is W$[1]$-complete: 

\vspace{0.2cm}

\noindent\textsc{Strong Stable Set ($(\{0\},\{0,1\})$-Domination)}:\\
Input: A graph $G=(V,E)$, an integer $k$.\\
Parameter: $k$.\\
Question: Is there an independent set $S\subseteq V$ of size $k$ such that $\forall v\in V\setminus S, |N(v)\cap S|\le 1$? 

\begin{theorem}
\textsc{Strong Stable Set} is complete for $\text{\emph{W}}[1]$.
\end{theorem}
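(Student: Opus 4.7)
The plan is to establish W[1]-hardness by a reduction from \textsc{Induced Matching} which, parameterized by the size $k$ of the matching, is known to be W[1]-hard; the W[1]-membership is inherited from \cite{sigmarho}, so only hardness needs to be proved.

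The starting observation is a distance characterization of strong stable sets: a set $D$ is $(\{0\},\{0,1\})$-dominating in a graph $H$ if and only if its vertices lie pairwise at distance at least $3$ in $H$. Indeed, $|N(v)\cap D|=0$ for $v\in D$ rules out two members of $D$ being adjacent, while $|N(v)\cap D|\le 1$ for $v\notin D$ rules out two members of $D$ sharing a common neighbour, i.e.\ being at distance exactly $2$; the converse is immediate.

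Given an instance $(G,k)$ of \textsc{Induced Matching}, I would map it to $(L(G),k)$, where $L(G)$ denotes the line graph of $G$. Two edges $e,f$ of $G$, viewed as vertices of $L(G)$, are at distance $1$ exactly when they share an endpoint, and at distance $2$ exactly when they are vertex-disjoint while some edge $g$ of $G$ joins an endpoint of $e$ to an endpoint of $f$. Hence a set $\{e_1,\ldots,e_k\}$ lies at pairwise distance at least $3$ in $L(G)$ precisely when it forms a matching with no further edge of $G$ connecting two matched edges, i.e.\ an induced matching of size $k$ in $G$. Combined with the distance characterization above, this yields a parameter-preserving equivalence between induced matchings of size $k$ in $G$ and strong stable sets of size $k$ in $L(G)$, and therefore W[1]-hardness.

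The main obstacle, should the W[1]-hardness of \textsc{Induced Matching} feel insufficiently self-contained in this context, would be to replace the one-step line-graph reduction by a direct reduction from \textsc{Multicolored Clique}. The natural design would use one vertex-selection gadget per colour class and one edge-selection gadget per pair of classes, wired through auxiliary vertices that create distance-$2$ conflicts exactly for inconsistent selections; the delicate point is to prevent spurious strong stable sets built entirely from auxiliary vertices, a pitfall that the line-graph route sidesteps entirely.
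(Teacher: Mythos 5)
Your proof is correct, but it follows a genuinely different route from the paper. You first observe that a strong stable set is exactly a set of vertices at pairwise distance at least $3$ (a correct and useful reformulation of $(\{0\},\{0,1\})$-domination), and then reduce from \textsc{Induced Matching} via the line graph: edges of $G$ at pairwise distance at least $3$ in $L(G)$ are precisely the induced matchings of $G$, and both notions are closed under taking subsets, so the size-$k$ versions correspond exactly; with W$[1]$-membership taken from \cite{sigmarho}, this gives completeness. The paper instead reduces directly from \textsc{Independent Set}: it builds $G'$ on vertex set $V\cup E$, keeping $V$ independent, turning $E$ into a clique, and adding incidence edges, so that independent sets of $G$ become strong stable sets of $G'$ and, conversely, a strong stable set can contain at most one clique vertex, a case handled via isolated vertices. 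Your approach is arguably more conceptual and the equivalence is tighter (no case analysis), but it outsources the hardness to the W$[1]$-hardness of \textsc{Induced Matching}, a result that is indeed known (e.g.\ Moser and Sikdar, even for bipartite graphs) but is not among the paper's cited sources, so you would need to add that reference; the paper's reduction is self-contained from \textsc{Independent Set}, which is already cited as W$[1]$-complete. Your fallback sketch via \textsc{Multicolored Clique} is not carried out and would need the gadget details (in particular ruling out spurious solutions), but it is not needed once the \textsc{Induced Matching} citation is supplied.
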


\begin{proof}
The $\text{W}[1]$-membership is an application of Theorem 8 in \cite{sigmarho}. We prove the hardness by a reduction from \textsc{Independent Set} which is complete for $\text{W}[1]$ \cite{fptc1}. Given an instance $(G=(V,E),k)$ of \textsc{Independent Set}, we consider the instance $(G',k)$ of \textsc{Strong Stable Set} where $G'=(V',E')$ with  $V'=V\cup E$ and $E'=\{(u,e)~|~\text{$e$ incident to $u$ in $G$}\}\cup (E\times E)$.
By construction, $G'$ consists of a stable set $V$ and a clique $E$, the edges between these two sets representing the edges of $G$. 
Let $S$ be an independent set in $G$, then by construction, $S$ is a strong stable set in $G'$. 
Let $S'$ be a strong stable set of size $k$ in $G'$. Since $E$ is a clique, $|S'\cap E|\in \{0,1\}$. If $|S'\cap E|=0$, then $S'\subseteq V$ and for any $u,v\in S'$, they have no common neighbor in $G'$, so there is no edge between $u$ and $v$ in $G$, so $S'$ is an independent set in $G$. Otherwise, if $|S'\cap E|=1$ then every $u\in S'\cap V$ is isolated in $G'$, so there are at least $k-1$ isolated vertices in $G$. Since $E$ is not empty there also exist non isolated vertices and we can take at least one of them to form together with the $k-1$ isolated vertices, an independent set of size $k$ in $G$. \hfill $\Box$
\end{proof}

\section{Other Domination Problems}
\label{otherdom}

Some natural domination problems cannot be described  in terms of $(\sigma,\rho)$-domina\-tion such as \textsc{Connected Dominating Set}. In this section, we show that the proof of the $(\sigma,\rho)$-domination W$[2]$-membership (Theorem \ref{sigmarho:w2}) can be generalized to $(P,\rho)$-domination, where $P$ is no longer a domination constraint but any recursive property. It implies that \textsc{Connected Dominating Set}, known to be hard for W$[2]$, is actually complete for W$[2]$. We also show that this technique can be applied to digraph problems with the example of \textsc{Digraph Kernel}.\vspace{0.2cm}\\
\noindent$(P,\rho)$-\textsc{Dominating Set of Size at Most} $k$:\\
Input: A graph $G=(V, E)$, an integer $k$.\\
Parameter: $k$.\\
Question: Is there a subset $D\subseteq V$ such that $|D|\leq k$ and:\\
\hspace{2cm}-- the sub-graph of $G$ induced by $D$ satisfies the property $P$;\\
\hspace{2cm}-- $\forall v\in V\setminus D, |N(v)\cap D|\in \rho$ ?

\begin{theorem}
If $\rho$ is a recursive set of integers and $P$ is a recursive property, then $(P,\rho)$-\textsc{Dominating Set of Size at Most} $k$ belongs to $\text{\emph W}[2]$.
\end{theorem}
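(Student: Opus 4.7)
The plan is to adapt the blind multi-tape Turing machine construction used to prove Theorem \ref{sigmarho:w2}. In that proof, Phase 2 sets up each ``vertex tape'' with the characteristic vector of $\sigma$ so that the subsequent head-shifting of Phase 3 ends up reading the symbol $1$ precisely when the $\sigma$-condition holds on vertices of $D$. Since $\sigma$ is the only place where the condition on the induced subgraph enters, we will simply replace this phase by a direct decision of the property $P$ on $G[D]$, keeping Phases 1 and 3 essentially unchanged.

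Concretely, Phase 1 will be reused almost verbatim: non-deterministically guess $D\subseteq V$ of size at most $k$ on tape 1 and initialize tapes $2,\ldots,n+1$ with the characteristic vector of $\rho$. Immediately after Phase 1, we will overwrite the tapes corresponding to vertices chosen in $D$ with constant $1$s, so that the final ``all heads but the first read $1$'' acceptance condition is automatically satisfied on those tapes and effectively only constrains the tapes of vertices in $V\setminus D$. This plays the role of the original Phase 2 but without any $\sigma$-dependence.

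To verify $P$, we introduce an auxiliary tape on which the machine writes the $k\times k$ adjacency matrix of $G[D]$. This can be done in $O(k^2)$ steps by iterating over ordered pairs of vertices read from tape 1 and using edge-dependent transitions analogous to the $d_t$ movement vector at the end of Phase 3 of Theorem \ref{sigmarho:w2} (the edges of $G$ are hard-wired into the transition relation). Since $P$ is recursive, there is a fixed Turing machine $M_P$ deciding $P$; we hard-wire its transitions and state set into our machine and simulate it on the auxiliary tape. Because the input to $M_P$ has size $O(k^2)$, the simulation halts in at most $f_P(k)$ steps for some computable function $f_P$ depending only on $P$. If $M_P$ halts in a rejecting state, the machine transitions into a sink and cannot reach $q_A$; if $M_P$ accepts, control passes to Phase 3, which is transferred verbatim from Theorem \ref{sigmarho:w2} and whose blind transitions carry out the neighborhood counting against $\rho$.

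The total length of any accepting computation is $O(k^2)+f_P(k)$, a computable function of the parameter $k$ alone, while the machine itself remains of size polynomial in $|G|$. This yields a reduction to \textsc{Short Blind Multi-Tape Non-Deterministic Turing Machine Computation} and hence W$[2]$-membership by Theorem \ref{theorem:SBMTNDTMW2}. The delicate point is the bookkeeping around the auxiliary tape: we must integrate $M_P$'s simulation with the head-position conventions of the surrounding phases, neutralize the tapes corresponding to $D$ so that the uniform ``all heads read $1$'' check still captures the $\rho$-condition exactly, and ensure that the blind transitions used for the neighborhood count do not disturb the auxiliary tape. All of the genuinely new machinery --- non-deterministic selection of $D$ and neighborhood counting by blind transitions --- is already provided by the proof of Theorem \ref{sigmarho:w2}.
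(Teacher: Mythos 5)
Your proposal is correct and follows essentially the same route as the paper: the paper likewise reuses the machine of Theorem \ref{sigmarho:w2} with the $\sigma$-condition trivialized (it takes $\sigma=\mathbb N$, which amounts to your ``all $1$s'' tapes) and then composes it with a machine that decides the recursive property $P$ on the induced subgraph of size $O(k^2)$ in time depending only on $k$. Your extra bookkeeping (auxiliary tape holding the adjacency matrix of $G[D]$, hard-wired simulation of $M_P$) is just a more explicit rendering of the same composition argument.
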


\begin{proof}
We use the blind multitape Turing machine of Theorem \ref{sigmarho:w2} with $\sigma=\mathbb N$, which outputs a  $(\mathbb N, \rho)$-dominating set $D$ if it exists, then we compose this machine with another one which decides whether such a set $D$ induces a subgraph satisfying the property $P$. Since the subgraph is of size $O(k^2)$ and $P$ is recursive, the computation time of the second machine is $f(k)$ for some function $f$.   \hfill $\Box$
\end{proof}

\noindent\textsc{Digraph Kernel}:\\
Input: A directed graph $G=(V,A)$, an integer $k$.\\
Parameter: $k$.\\
Question: Is there a kernel of $D$ of size at most $k$? A \emph{kernel} is an independent set $S$ (there exists no $u,v\in S$ such that $uv$ or $vu$ is in $A$) such that for every vertex $x\in V\setminus S$, there exists $y\in S$ such that $xy\in A$.

\begin{theorem}
\textsc{Digraph Kernel} is complete for $\text{\emph W}[2]$.
\end{theorem}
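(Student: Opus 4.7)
The plan is to adapt the blind multi-tape Turing machine construction of Theorem~\ref{sigmarho:w2}. W$[2]$-hardness is already known from \cite{kernelDigraph}, so I focus on W$[2]$-membership. The main obstacle compared with the $(\sigma,\rho)$-domination setting is the asymmetry of the kernel condition: independence in $S$ is an \emph{undirected} constraint (no arc in either direction between two vertices of $S$), whereas the domination condition is \emph{directional} (each $v\notin S$ must have an out-neighbour in $S$). To accommodate this I would use $1+2n$ tapes: one tape for the guessed set $S$, and, for every vertex $v_t$, two further tapes, one dedicated to $|N^+(v_t)\cap S|$ and one to $|N^-(v_t)\cap S|$.

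Phase~1 and Phase~2 would mirror the corresponding phases of Theorem~\ref{sigmarho:w2}: the machine non-deterministically writes a set $S\subseteq V$ of size at most $k$ on the first tape, initialises each out-neighbour tape with the characteristic vector of $\mathbb{N}^+$ (that is, $0,1,1,\ldots,1$) and each in-neighbour tape with all $1$s (the in-degree from $S$ is unconstrained for vertices outside $S$); then, for each $v\in S$, both tapes associated with $v$ are overwritten with the characteristic vector of $\{0\}$ (that is, $1,0,0,\ldots,0$), reflecting that an independent set has no arcs -- in either direction -- between its members.

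In Phase~3, for each $v_i\in S$ read from the first tape, a single blind transition moves the heads of the $2n$ other tapes simultaneously: the head on the out-neighbour tape of $v_t$ moves right iff $(v_t,v_i)\in A$, and the head on the in-neighbour tape of $v_t$ moves right iff $(v_i,v_t)\in A$. Each such movement pattern depends only on $v_i$ and on the fixed arc set of $G$, so it can be hard-coded exactly as in Theorem~\ref{sigmarho:w2}. The machine accepts iff every head but the first one reads the symbol~$1$; by construction this happens precisely when $|N^+(v_t)\cap S|=|N^-(v_t)\cap S|=0$ for every $v_t\in S$ and $|N^+(v_t)\cap S|\ge 1$ for every $v_t\notin S$, i.e.\ when $S$ is a kernel. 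Since the number of transitions is polynomial in $|G|$ and the computation halts in $O(k^2)$ steps, Theorem~\ref{theorem:SBMTNDTMW2} yields W$[2]$-membership, and hence completeness together with the known hardness.
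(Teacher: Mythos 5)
Your proof is correct and follows essentially the same route as the paper: adapt the blind multi-tape machine of Theorem~\ref{sigmarho:w2} (guess $S$, initialise tapes with characteristic vectors, move heads by blind transitions according to the arcs) and combine with the known W$[2]$-hardness. The paper's version is slightly leaner---it keeps one tape per vertex with $\sigma=\{0\}$ and $\rho=\mathbb{N}^+$ and, in Phase~3, moves only the heads of the tapes associated with incoming neighbours, because requiring $|N^+(v)\cap S|=0$ for every $v\in S$ already enforces independence (any arc inside $S$ is an out-arc of one of its endpoints), so your additional $n$ in-neighbour tapes are sound but redundant.
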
 

\begin{proof}
The hardness for $\text{W}[2]$ is proved in \cite{kernelDigraph}. The proof of the membership is very similar to the $\text{W}[2]$ membership of $(\sigma,\rho)$\textsc{-Dominating Set}  (Theorem \ref{sigmarho:w2}).
The machine and the initialization are the same, with $\sigma=\{0\}$ and $\rho=\mathbb{N}^+$. In phase (3), only the heads of the tapes associated with \emph{incoming neighbors} move to the right. \hfill $\Box$
\end{proof}

\section{Problems From Coding Theory}
\label{codes}

Parameterized complexity of problems from coding theory, in particular \textsc{Minimal Distance} and \textsc{Weight Distribution}, have been studied in \cite{paraCompCode}. We prove that the dual parameterizations of these problems are in W$[2]$. Moreover, we consider extensions of these problems to linear codes over $\mathbb F_{q}$ for any $q$ power of prime.

\vspace{0.2cm}

\noindent\textsc{Minimal Distance Over} $\mathbb{F}_{q}$:\\
Input: $q$ a power of prime, $k$ an integer, an $m \times n$ matrix $H$ with entries in $\mathbb{F}_{q}$.\\
Parameters: $k,q$.\\
Question: Is there a linear combination of at least one and at most $k$ columns of $H$ which is equal to the all-zero vector?

\vspace{0.2cm}

\noindent\textsc{Weight Distribution Over} $\mathbb{F}_{q}$:\\
Input: $q$ a power of prime, $k$ an integer, an $m \times n$ matrix $H$ with entries in $\mathbb{F}_{q}$.\\
Parameters: $k,q$.\\
Question: Is there a linear combination of exactly $k$ columns of $H$ which is equal to the all-zero vector?
\begin{theorem}\label{paraCompCode:w2}
\textsc{Weight Distribution Over} $\mathbb{F}_{q}$ is hard for $\text{\emph{W}}[1]$ and belongs to $\text{\emph{W}}[2]$, and \noindent\textsc{Minimal Distance Over} $\mathbb{F}_{q}$ belongs to $\text{\emph{W}}[2]$.
\end{theorem}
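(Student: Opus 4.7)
The plan is to establish the three claims separately. The $\text{W}[1]$-hardness of \textsc{Weight Distribution Over} $\mathbb{F}_q$ comes for free from the result in \cite{paraCompCode} that \textsc{Weight Distribution} over $\mathbb{F}_2$ is already $\text{W}[1]$-hard: an instance $(H,k)$ of the latter is the very same instance with $q=2$ of the former, and the parameter $(q,k)=(2,k)$ is polynomially bounded in the original parameter $k$, so the identity mapping is a valid FPT-reduction.

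For the $\text{W}[2]$-membership of \textsc{Minimal Distance Over} $\mathbb{F}_q$, I plan to build a blind $(m+1)$-tape non-deterministic Turing machine along the lines of the proof of Theorem \ref{sigmarho:w2}, and then invoke Theorem \ref{theorem:SBMTNDTMW2}. Tape $0$ will hold a non-deterministic guess of $s\le k$ column--coefficient pairs $(j_1,c_1),\dots,(j_s,c_s)$ with distinct indices $j_i\in[1,n]$ written in increasing order and coefficients $c_i\in\mathbb{F}_q\setminus\{0\}$, while tape $r\in[1,m]$ will track the partial sum $\sum_{i}c_iH[r,j_i]\bmod q$ \emph{as the position of its head}. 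In an initialisation phase each row tape is pre-filled over positions $\ell\in[0,k(q-1)]$ with a marker pattern whose cell $\ell$ holds $1$ iff $\ell\equiv0\pmod q$ and $0$ otherwise; this pattern is identical on every row tape and can therefore be installed in parallel with blind transitions. A second phase then consumes the pairs on tape $0$ one by one: since a single transition shifts each head by at most one cell, the update for a given pair $(j,c)$ is broken into $q-1$ successive blind transitions indexed by a step counter $s\in[0,q-2]$. The transition associated with $(j,c,s)$ reads only the symbol $(j,c)$ from tape $0$, leaves the row-tape symbols untouched, and shifts the head on row tape $r$ to the right exactly when $s<(c\cdot H[r,j])\bmod q$; the shift vector is thus fully determined by $(j,c,s)$ together with the fixed data $H$. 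A final phase accepts iff all $m$ row-tape heads read $1$, which is equivalent to $\sum_{i}c_iH[r,j_i]\equiv 0\pmod q$ for every row $r$, i.e.\ to the chosen columns forming a linear dependency in $\mathbb{F}_q^m$. The total number of computational steps is $O(kq)$, a function of the parameters alone, the machine description has size polynomial in $|H|$, $q$, and $k$, and Theorem \ref{theorem:SBMTNDTMW2} then places the problem in $\text{W}[2]$. The $\text{W}[2]$-membership of \textsc{Weight Distribution Over} $\mathbb{F}_q$ is obtained by the same machine, forcing the phase-$1$ guess to consist of exactly $k$ distinct column indices, exactly as is already done for the vertex guess in the proof of Theorem \ref{sigmarho:w2}.

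The only delicate point is engineering phase $2$ so that the per-column updates are \emph{blindly parallel} across the $m$ row tapes: the shift amounts $(c\cdot H[r,j])\bmod q$ genuinely differ from row to row, yet a single transition must specify a distinct displacement on every row tape without branching on the $q^m$ joint cell contents of those tapes. This is precisely where the blind-transition machinery of Theorem \ref{theorem:SBMTNDTMW2} is indispensable: a faithful simulation with standard multi-tape transitions would require $q^m$ transitions per micro-step, ruining the polynomial size of the machine description, whereas with blind transitions one transition per triple $(j,c,s)$ suffices.
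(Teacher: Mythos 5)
Your overall construction is the same as the paper's: a blind $(m+1)$-tape non-deterministic machine in which the first tape holds the non-deterministic guess, each of the $m$ remaining tapes acts as a counter modulo $q$ encoded in the \emph{position} of its head over a pre-installed periodic pattern (cell $\ell$ marked iff $\ell\equiv 0\pmod q$), the update for each chosen column is decomposed into $q-1$ blind micro-steps whose shift vector is hard-wired from $H$, and acceptance occurs when every row head reads the marker; the $O(kq)$ step bound and the appeal to Theorem \ref{theorem:SBMTNDTMW2} are also exactly the paper's, as is the W$[1]$-hardness argument (identity reduction from \textsc{Weight Distribution} with $q=2$). Your closing remark about why blind transitions are essential (a non-blind simulation would need exponentially many transitions per micro-step) matches the paper's discussion as well.

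There is, however, one mismatch with the statement being proved. The problems as defined in the paper ask for a set of (at most, resp.\ exactly) $k$ columns that \emph{sum} to the all-zero vector, i.e.\ all coefficients are $1$; your machine instead guesses column--coefficient pairs $(j_i,c_i)$ with arbitrary $c_i\in\mathbb{F}_q\setminus\{0\}$ and tests $\sum_i c_i H[r,j_i]\equiv 0\pmod q$, which is the ``nonzero linear combination'' predicate. For $q>2$ these are genuinely different: with $H=\begin{pmatrix}1 & 1\end{pmatrix}$ over $\mathbb{F}_3$ and $k=2$, no nonempty subset of columns sums to zero, yet $1\cdot H_{\cdot 1}+2\cdot H_{\cdot 2}=0$, so your machine accepts an instance on which the paper's problem answers no. The repair is immediate and brings you back to the paper's machine verbatim: drop the coefficients from the phase-$1$ guess and, for each chosen column $j$, shift the head on row tape $r$ by $H[r,j]\bmod q$ cells over the $q-1$ micro-steps. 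With that change the argument is correct and coincides with the paper's proof, including the treatment of \textsc{Weight Distribution Over} $\mathbb{F}_q$ (exactly $k$ columns) versus \textsc{Minimal Distance Over} $\mathbb{F}_q$ (at most $k$, nonempty).
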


\begin{proof}
Since \textsc{Weight Distribution} is a particular case of \textsc{Weight Distribution Over} $\mathbb{F}_{q}$, with $q=2$, \textsc{Weight Distribution Over} $\mathbb{F}_{q}$ is hard for W$[1]$ \cite{paraCompCode}. 
For the $\text{W}[2]$ membership, let $\psi:[0,q)\to \mathbb F_q$ be an arbitrary indexing of the elements of $\mathbb F_q$ s.t. $\psi(0)=0$. There exist a prime $p$ and an integer $c$ such that $q=p^c$, and there is an isomorphism $\varphi : \mathbb F_q \to \mathbb F_p[X]/P(X)$, where $F_p[X]/P(X)$ is the set of polynomials in $X$ with coefficients in $\mathbb F_p$ modulo $P(X)$. 
Let $H'$ be a $mc\times (n(q{-}1))$-matrix over $\mathbb F_p$ such that $\forall i,j,\ell \in[0,m)\times [0,n)\times[1,q)$, $\sum_{u=0}^{c-1}  H'_{it,j\ell}X^t  = \varphi\left(  \psi(\ell)\cdot H_{i,j} \right)$. Intuitively, each of the $n(q-1)$ columns of $H'$ corresponds to one column of $H$ multiplied by a non-zero element of $\mathbb F_q$. Moreover any element $a\in \mathbb F_q$ is encoded using a $c\times 1$-block $\left(\begin{array}{c}r_0\\\vdots\\ r_{c-1}\end{array}\right)$  such that $\varphi(a) = \sum_{t=0}^{c-1} r_tX^t$. It leads to the  $mc\times (n(q{-}1))$-matrix $H'$ which can be computed in time $m.n.f(q)$ for some function $f$.

Notice that there exists a linear combination of $k$ columns of $H$ which is equal to $0$ if and only if there exist $0\le i_1<i_2<\ldots<i_k<m(q-1)$ such that the corresponding columns of $H'$ sums to $0$ (i.e. $\forall j\in [0,mc), \sum_{r=1}^k H'_{j,i_r} = 0$) and $\forall r\in [1,k), \left\lfloor \frac {i_r} m\right\rfloor \neq \left\lfloor \frac {i_{r+1}} m\right\rfloor$. The last condition guarantees that the $k$ chosen columns in $H'$ correspond to actually $k$ distinct columns in $H$. 

To decide whether such $i_1,\ldots,i_k$ exists we use the following  blind $(mc+1)$-tape Turing Machine $M=(Q,\Gamma,\Delta,\Sigma,b,q_I,Q_A)$. The first tape is associated with the set of columns of $H'$ and each of the remaining tape is associated with a row of $H'$. 
The alphabet is $\Gamma=\{\Box,0,1\}\cup\{h_i|i\in[1,n]\}$ and the states are $Q=\{q_{i,s}~|~i\in[1,n(q{-}1){+}1],s\in[0,k\cdot p]\} \cup \{q^{\textup{ret}}_{s}~|~s\in[1,k\cdot p+1]\} \cup \{q^{\textup{av}}_{i,s}~|~i\in[1,n],s\in[0,p-1]\} \cup \{q^{\textup{read}},q_A\}$, with $q_I=q_{1,0}$, $b=\Box$, $\Sigma=\varnothing$ and $Q_A=\{q_A\}$.
The transitions are separated in two phases:

\vspace{0.2cm}

\noindent \textit{Phase 1 - Initialization}: 
First, $k$ columns of $H'$ are non-deterministically chosen on the first tape, while of the other tapes is initialized with $k$ times the pattern $10^{p-1}$ (i.e. $1$ followed by $p-1$ times $0$), such that the $i^{th}$ cell is $1$ iff $i\equiv 0\mod p$. In order to avoid choosing two columns of $H'$ corresponding to the same column of $H$ but with a different factor, we go strait to the next block of columns, i.e. when a column $j$ is chosen, the next column is chosen in among the columns indexed from $\ell$ to $n(q-1)$ with $\ell>j$ and $\ell \equiv 0\bmod (q-1)$:
\vspace{0.2cm}

\noindent $\langle \Box\Box^{m\cdot c},q_{i,s},h_j 1^{m\cdot c},q_{\ell,s+1},(+1) (+1)^{m\cdot c} \rangle$\\
$~\quad i{\in}[1,n(q-1)], s{\in}[0,k-1], j{\in}[i,n(q-1)]$, if $s{\equiv} 0 (mod\ p)$\\
$~\quad \ell$ is the smallest integer such that $\ell>j$ and $\ell{\equiv} 0 (mod\ q-1)$\\~\\
\noindent$\langle \Box\Box^{m\cdot c},q_{i,s},h_j 0^{m\cdot c},q_{\ell,s+1},(+1) (+1)^{m\cdot c} \rangle$\\
$~\quad i{\in}[1,n(q-1)], s{\in}[0,k-1], j{\in}[i,n]$, if $s{\not\equiv} 0 (mod\ p)$\\
$~\quad \ell$ is the smallest integer such that $\ell>j$ and $\ell{\equiv} 0 (mod\ q-1)$\\~\\
$\langle \Box\Box^{m\cdot c},q_{i,s},\Box 1^{m\cdot c},q_{i,s+1},0 (+1)^{m\cdot c} \rangle$\\ $~\quad i {\in}[1,n(q{-}1){+}1], s{\in}[k,kp)$, if $s{\equiv} 0 (mod\ p)$\\
\\
$\langle \Box\Box^{m\cdot c},q_{i,s},\Box 0^{m\cdot c},q_{i,s+1},0 (+1)^{m\cdot c} \rangle$\\ $~\quad i{\in}[1,n{+}1], s{\in}[k,kp)$, if $s{\not\equiv} 0 (mod\ p)$\\
\\
\noindent\begin{tabular}{ll}
$\langle \Box\Box^{m\cdot c},q_{i,k\cdot p},\Box 1^{m\cdot c},q^{\textup{ret}}_{1},(-1) (-1)^{m\cdot c}\rangle$&$~i{\in}[1,n(q-1){+}1]$\\
&\\
$\langle \tspace\ \tspace^{m\cdot c},q^{\textup{ret}}_{s},\tspace\ \tspace^{m\cdot c},q^{\textup{ret}}_{s+1},(-1) (-1)^{m\cdot c}\rangle$&$~s{\in}[1,k]$\\
&\\
$\langle \tspace\ \tspace^{m\cdot c},q^{\textup{ret}}_{s},\tspace\ \tspace^{m\cdot c},q^{\textup{ret}}_{s+1},0 (-1)^{m\cdot c}\rangle$&$~s{\in}[k{+}1,kp{+}1]$\\
&\\
$\langle \tspace\ \tspace^{m\cdot c},q^{\textup{ret}}_{k\cdot p+1},\tspace\ \tspace^{m\cdot c},q^{\textup{read}},0 0^{m\cdot c}\rangle$\\
\end{tabular}

\vspace{0.2cm}

\noindent \textit{Phase 2 - Recognition}:  
In order to check that the sum of those columns is the all-zero vector on $\mathbb{F}_{p}$, for any column $h_i$ in the chosen set, the head of each tape $j$ moves to the right $H'_{i,j}$ times using blind transitions.

~\\ 

\begin{tabular}{ll}
$\langle h_i \tspace^{mc},q^{\textup{read}},h_i \tspace^{mc},q^{\textup{av}}_{i,1},(+1) 0^{mc}\rangle$&
$~ i\in[1,n]$\\
&\\
$\langle \tspace\ \tspace^{mc},q^{\textup{av}}_{i,s},\tspace\ \tspace^{mc},q^{\textup{av}}_{i,s+1},0 d_1 \dots d_{mc} \rangle$&
$~ i\in[1,n], s\in[0,p-2]$\\&with $\forall j{\in}[1,mc]$,$d_j=\begin{cases}1&\text{if $H'_{i,j}>s$}\\0&\text{otherwise}\end{cases}$\\
&\\
$\langle \tspace\ \tspace^m,q^{\textup{av}}_{i,p-1},\tspace\ \tspace^{mc},q^{\textup{read}},0 0^{mc} \rangle$&
$~ i\in[1,n]$\\
&\\
$\langle \Box 1^{mc},q^{\textup{read}},\Box 1^{mc},q_A,0 0^{mc}\rangle$&\\
\end{tabular}

~\\

In the first phase a set $D$ of columns is non-deterministically chosen on the first tape and on each  of the remaining tapes,  $kp$
 cells are filled with  $0$ or $1$ depending on the rest modulo $p$ of their position. Then all the heads move back to leftmost non blanc symbol. 
 Notice that the columns in $D$ are chosen to guarantee that $\forall i\neq i'\in D$, $\left\lfloor \frac {i} m\right\rfloor \neq \left\lfloor \frac {i'} m\right\rfloor$. In the second phase, the sum of the columns in $D$ is computed by moving the heads of the tapes to the right. The machine accepts iff at the end all the heads (but the first one) point out a symbol $0$, i.e. the sum of all the columns in $D$ of $H'$ is the zero vector. Regarding the number of transitions, in the first phase there are $2kp$ transitions and at most $kp$ in the second phase. Moreover the size of the machine is polynomial in $n$,$m$,$q$ and $k$. As a consequence \textsc{Weight Distribution} is in W$[2]$.  

The proof of W$[2]$-membership for \textsc{Minimal Distance} is the similar, except that $D$ is chosen of size at most $k$. \hfill $\Box$
\end{proof}

\noindent\textsc{Dual Minimal Distance Over} $\mathbb{F}_{q}$:\\
Input: $q$ a power of prime, $k$ an integer, an $m \times n$ matrix $H$ with entries in $\mathbb{F}_{q}$.\\
Parameters: $k,q$.\\
Question: Is there a linear combination of at least  $n-k$ columns of $H$ equal to the all-zero vector?

\vspace{0.2cm}

\noindent\textsc{Dual Weight Distribution Over} $\mathbb{F}_{q}$:\\
Input: $q$ a power of prime, $k$ an integer, an $m \times n$ matrix $H$ with entries in $\mathbb{F}_{q}$.\\
Parameters: $k,q$.\\
Question: Is there a linear combination of exactly $n-k$ columns of $H$ equal to the all-zero vector?

\begin{theorem}
\label{dualW2}
\textsc{Dual Minimum Distance Over $\mathbb{F}_{q}$} and \textsc{Dual Weight Distribution Over $\mathbb{F}_{q}$} are in $\text{\emph W}[2]$.
\end{theorem}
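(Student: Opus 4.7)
The plan is to reduce both problems to the W$[2]$-membership argument of Theorem~\ref{paraCompCode:w2}, via the following complementation trick over $\mathbb F_2$: a subset $T$ of columns of $H$ sums to zero iff its complement $D := [n] \setminus T$ satisfies $\sum_{j \in D} H_{\cdot,j} = S$, where $S := \sum_{j=1}^n H_{\cdot,j}$ is the (fixed) total column sum. Thus $|T| \geq n-k$ iff $|D| \leq k$, and $|T| = n-k$ iff $|D| = k$, so it suffices to search for a \emph{small} set $D$ of at most $k$ (respectively exactly $k$) columns whose sum equals the precomputable vector $S$, rather than for a large set summing to zero.

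To do this, I would adapt the blind $(m+1)$-tape non-deterministic Turing machine from the proof of Theorem~\ref{paraCompCode:w2} specialized to $q = 2$, changing only the Phase~1 initialization. Concretely, Phase~1 still non-deterministically writes a candidate set $D$ on the first tape (of size $\leq k$ for \textsc{Dual Minimum Distance}, and of size exactly $k$ for \textsc{Dual Weight Distribution}), but each row-tape $l \in [1,m]$ is now initialized so that the cell at position $i$ contains the symbol $1$ iff $i \equiv S_l \pmod 2$, rather than iff $i \equiv 0 \pmod 2$. Phase~2 then runs exactly as in Theorem~\ref{paraCompCode:w2}: for every column index $j$ appearing on the first tape, a blind transition shifts the head of row-tape $l$ by $H_{l,j}$ positions to the right, independently of the symbols read by the other heads. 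The machine accepts iff all $m$ row-heads simultaneously read $1$, i.e.\ iff $\sum_{j \in D} H_{l,j} \equiv S_l \pmod 2$ for every $l$, which is exactly the desired identity $\sum_{j \in D} H_{\cdot,j} = S$ over $\mathbb F_2$.

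The number of transitions of this machine is polynomial in $|H|$ and, as in the proof of Theorem~\ref{paraCompCode:w2}, the length of any accepting computation is bounded by a function of $k$ alone; hence both dual problems lie in W$[2]$. The non-emptiness constraint on $T$ for \textsc{Dual Minimum Distance} (equivalently $|D| \leq n-1$) is the one mild technicality: it is vacuous as soon as $k < n$, since then $|T| \geq n-k \geq 1$, while the case $k \geq n$ can be handled outright by brute force in time $2^n \leq 2^k$, which is FPT and hence in W$[2]$. The main thing to verify carefully in the formal write-up is that the dependence of the Phase~1 initialization on the precomputed vector $S$ genuinely fits within the construction of the transition table; this however costs only $O(km)$ additional deterministic transitions, one per row-tape and per initial cell, parameterized by the bits of $S$ that are baked in when the machine is assembled from the input.
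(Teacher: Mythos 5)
Your proposal is correct and takes essentially the same approach as the paper: nondeterministically guess the small complement set of at most (or exactly) $k$ columns, precompute the total sum of each row, and shift each row-tape's initialization pattern by that total so that the unchanged blind Phase-2 head movements accept exactly when the large complementary set sums to zero over $\mathbb{F}_2$. Your explicit treatment of the non-emptiness condition and the exact-size variant only makes precise what the paper's terse proof leaves implicit.
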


\begin{proof}
First we execute the same $FPT$ preprocessing as in standard parameterization (Theorem \ref{paraCompCode:w2}) to get the matrix $H'$ over $\mathbb F_p$ where $p$ is the characteristic of $\mathbb F_q$. Let the vector $v$ be the sum of all the columns of $H'$, and  notice that there is set $D$ of at $n-k$ columns that sum to the zero vector iff the sum of all the columns but those in $D$ sum to $v$. To this end we consider the matrix $\tilde H = \left(-v|H'\right)$ and we slightly modify the machine used in Theorem \ref{paraCompCode:w2} to decide whether the exists a set of at most $k+1$ columns which includes the first column and which sum to $0$. So the phase $1$ is modified to force the set of chosen columns to include the first column of $\tilde H$. 
The proof that \textsc{Dual Weight Distribution Over $\mathbb{F}_{q}$} is in W$[2]$ is the same except that the size of $S'$ is fixed to $k$.\hfill $\Box$
\end{proof}

Theorem \ref{dualW2} shows  that the problem \textsc{Minimal Distance Over} $\mathbb{F}_q$ with $q$ power of prime, which consists in deciding whether there exists a subset of at most $k$ columns of a matrix $H$ with entries in $\mathbb{F}_q$ that sum to the all-zero vector is in W$[2]$. We can prove similarly the  W$[2]$-membership of the problem which consists in deciding whether there exists a set of at most $k$ columns that sum to a given vector. However it is not clear whether the problem which consists in deciding the existence of a set of at most $k$ columns that sum to a vector with no zero entry (or equivalently to a vector of maximal Hamming weight). To be more precise when $q$ is prime one can use the same machine as in proof of Theorem   \ref{paraCompCode:w2} and change the last transition to check that non of the entries is $0$, but this technique fails when $q$ is not prime (say $q=p^2$).

\section{Conclusion and Perspectives}
We have demonstrated several results on the parameterized complexity of domina\-tion-type problems, including that for any (recursive)  $\sigma$ and $\rho$, $(\sigma,\rho)$-domination is in W$[2]$ for both standard and dual parameterizations i.e. $(\sigma,\rho)$-dominating set of size $k$ (and at most $k$) and $(\sigma,\rho)$-dominating set of size $n-k$ (and at least $n-k$). To this end, we have extended the Turing way to parameterized complexity with a new way to prove W$[2]$-membership using `blind' Turing machines. We believe that this machine  can be used to prove W$[2]$-membership of other problems, not necessarily related to domination.    

\noindent Several questions remain open. First, the long-standing question regarding the W$[1]$-hardness of \textsc{Minimal Distance} remains open \cite{paraCompCode}. Moreover, several problems related to domination with parity constraints, such as \textsc{Weight Distribution}, are W$[1]$-hard and in W$[2]$, are they complete for one of these two classes, or intermediate? This question is particularly interesting since these problems have been proved to form an equivalence class with other problems from quantum computing \cite{GJMP-MEMICS12,CP-WOD12}.

It is interesting to notice  that, for the dual parameterization, the difference between \textsc{Minimal Distance} and \textsc{Weight Distribution} seems to vanish in the sense that both problems are W$[1]$-hard, while the completeness for W$[1]$ or W$[2]$ remains open. In fact, no problem of $(\sigma,\rho)$-domination is known to be W$[2]$-complete for the dual parameterization, thus one can wonder if such a problem exists or if for any $\sigma$ and $\rho$, $(\sigma,\rho)$-domination is in W$[1]$ for the dual parameterization? It would be interesting to examine $\sigma$ and $\rho$ not ultimately periodic since they are among the  few known cases of hardness when the problem is parameterized by the tree-width \cite{sigRhoTreeWidth}.

\bigskip
\subsubsection*{Acknowledgments. } 
We would like to thank Sylvain Gravier and Mehdi Mhalla for several helpful discussions, and anonymous referees for fruitful comments. This work has been partially funded by the ANR-10-JCJC-0208 CausaQ grant and by the  Rh\^one-Alpes region.



\end{document}